\newenvironment{proof}{\paragraph{Proof:}}{\hfill$\square$}
\DeclareMathOperator*{\argmin}{arg\,min}
\providecommand{\leadsfrom}{%
  \mathrel{\mathpalette\reflect@squig\relax}%
}
\newcommand{\reflect@squig}[2]{%
  \reflectbox{$\m@th#1\leadsto$}%
}
\newenvironment{changemargin} [2]{\begin{list}{}{
          \setlength{\topsep}{0pt}\setlength{\leftmargin}{0pt}
          \setlength{\rightmargin}{0pt}
          \setlength{\listparindent}{\parindent}
          \setlength{\itemindent}{\parindent}
          \setlength{\parsep}{0pt plus 1pt}
          \addtolength{\leftmargin}{#1}\addtolength{\rightmargin}{#2}
          }\item }{\end{list}}
\newenvironment{myitemize} 
   {
     \begin{changemargin}{-3pt}{-0cm}
     \vspace{-10pt}
     \hspace{-5pt}
     \begin{itemize}
     \setlength{\itemsep}{3pt}
   }
   {
     \end{itemize}
     \vspace{-0pt}
     \end{changemargin}
   }
\newcommand{\system}{{\sf dp-GAN}\xspace}
\newcommand{\gan}{GAN\xspace}
\begin{document}

\title{Differentially Private Releasing via Deep Generative Model\\
(Technical Report)}



%

\author{Xinyang Zhang}
\affiliation{%
  \institution{Lehigh University}
}
\email{xizc15@lehigh.edu}

\author{Shouling Ji}
\affiliation{%
  \institution{Zhejiang University}
}
\email{sji@gatech.edu}

\author{Ting Wang}
\affiliation{%
  \institution{Lehigh University}
}
\email{inbox.ting@gmail.com}

\maketitle



\subsection*{Abstract}
Privacy-preserving releasing of complex data (e.g., image, text, audio) represents a long-standing challenge for the data mining research community. Due to rich semantics of the data and lack of {\em a priori} knowledge about the analysis task, excessive sanitization is often necessary to ensure privacy, leading to significant loss of the data utility. In this paper, we present \system, a general private releasing framework for semantic-rich data. Instead of sanitizing and then releasing the data, the data curator publishes a deep generative model which is trained using the original data in a differentially private manner; with the generative model, the analyst is able to produce an unlimited amount of synthetic data for arbitrary analysis tasks. In contrast of alternative solutions, \system highlights a set of key features: (i) it provides theoretical privacy guarantee via enforcing the differential privacy principle; (ii) it retains desirable utility in the released model, enabling a variety of otherwise impossible analyses; and (iii) most importantly, it achieves practical training scalability and stability by employing multi-fold optimization strategies. Through extensive empirical evaluation on benchmark datasets and analyses, we validate the efficacy of \system.\\

(The source code and the data used in the paper is available at: https://github.com/alps-lab/dpgan)

\section{Introduction}
\label{sec:intro}

With the continued advances in mobile computing and the surging popularity of social media, a massive amount of semantic-rich data (e.g., image, text, audio) about individuals is being collected. While analyzing and understanding such data entails tremendous commercial value (e.g., targeted advertisements and personalized recommendations), governments and organizations all have recognized the critical need of respecting individual privacy in such practice~\cite{apple}. In general, privacy protection can be enforced in two settings. In the {\em interactive} setting, a trusted curator collects data from individuals and provides a privacy-preserving interface for the analyst to execute queries over the data; in the more challenging {\em non-interactive} setting, the curator releases a ``sanitized'' version of the data, simultaneously providing analysis utility for the analyst and privacy protection for the individuals represented in the data~\cite{Dwork:2014:book}.

Hitherto, privacy-preserving releasing of semantic-rich data still represents a long-standing challenge for the privacy and security research communities: the rich semantics of such data enable a wide variety of potential analyses, while the concrete analyses are often unknown ahead of releasing, especially in the case of exploratory data analysis. Therefore, to ensure privacy, excessive sanitization is often necessary, which may completely destroy the data utility for potential analyses.

\begin{figure}
\epsfig{file=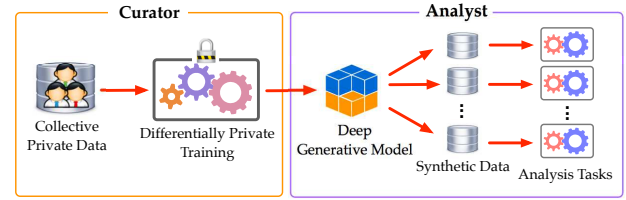, width=85mm}
\caption{High-level design of \system, a privacy-preserving releasing framework for semantic-rich data. \label{fig:framework}}
\end{figure}

In this paper, we tackle this challenge by integrating the state-of-the-art deep learning methods with advanced privacy-preserving mechanism. Specifically, we present \system, a new private releasing framework for semantic-rich data. With \system, instead of releasing a sanitized version of the original data, the curator publishes a generative model (i.e., generative adversarial network~\cite{Goodfellow:2014:nips}), which is trained using the original data in a privacy-preserving manner. The analyst, once equipped with this generative model, is able to produce synthetic data for the intended analysis tasks. The high-level framework of \system is illustrated in Figure~\ref{fig:framework}.

In comparison with alternative solutions (e.g., sanitizing and then releasing the data),  \system highlights with a number of significant advantages. First, it enforces {\em differential privacy}~\cite{Dwork:2006:icalp}, the state-of-the-art privacy principle, in the training of generative models. Due to its closure under post-processing property~\cite{Dwork:2014:book}, differential privacy ensures that the released model provides theoretically guaranteed privacy protection for the training data. Second, the use of generative models (e.g., generative adversarial networks in particular) as the vehicles of data releasing enables the synthesized data to capture the rich semantics of the original data. The faithful preservation of desirable utility leads to a variety of otherwise impossible analyses. For example, we show empirically that \system is able to effectively support semi-supervised classification tasks. Finally, the generative model is able to produce an unlimited amount of synthetic data for arbitrary analysis tasks, as shown in Figure~\ref{fig:framework}.

However, realizing \system entails two major challenges. First, it requires new algorithmic advances to implement differential privacy within generative model training. To this end, we extend the framework of Improved Wasserstein GAN~\cite{Gulrajani:2017:wganip} by integrating the state-of-the-art privacy enhancing mechanisms (e.g., Gaussian mechanism~\cite{Dwork:2014:book}) and provide refined analysis of privacy loss within this framework. Second, the stability and scalability issues of training \gan models are even more evident once privacy enhancing mechanisms are incorporated. To this end, we develop multi-fold optimization strategies, including {\em weight clustering}, {\em adaptive clipping}, and {\em warm starting}, which significantly improve both training stability and utility retention. Our contributions can be summarized as follows.

\begin{myitemize}
\item First, to our best knowledge, \system is the first working framework that realizes the paradigm of privacy-preserving model releasing for semantic-rich data. We believe this new paradigm is applicable for a broad range of privacy-sensitive data publishing applications.
\item Second, in implementing \system, we develop multi-fold system optimization strategies that not only successfully incorporate privacy enhancing mechanisms within training deep generative model, but also significantly improve the stability and scalability of generative model training itself.
\item Third, we conduct extensive empirical evaluation using real large-size image data to validate the efficacy of \system. We show that \system, besides providing theoretically guaranteed privacy protection,  preserves desirable utility of the original data, enabling a set of otherwise impossible analysis tasks.
\end{myitemize}

The remainder of the paper proceeds as follows. Section~\ref{sec:background} reviews the background of deep generative models and differential privacy; Section~\ref{sec:model} presents the high-level design of \system; Section~\ref{sec:opt} details its implementation, in particular, the multi-fold optimizations to improve the stability and scalability of model training; Section~\ref{sec:eval} empirically evaluates our proposed solution; Section~\ref{sec:liter} discusses additional relevant literature; The paper is concluded in Section~\ref{sec:end}.

\section{Preliminaries}
\label{sec:background}

In this section, we introduce the two basic building blocks of \system, generative adversarial network and differential privacy.


\begin{figure}
\centering
\epsfig{file=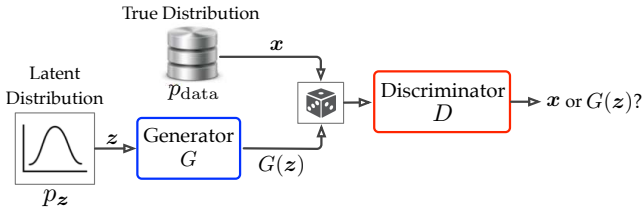, width=85mm}
\caption{Illustration of generative adversarial networks. \label{fig:gan}}
\end{figure}

\subsection{Generative Adversarial Network}
The generative adversarial network (GAN)~\cite{Goodfellow:2014:nips}
is a class of unsupervised learning algorithms which are implemented by
an adversarial process. As illustrated in Figure~\ref{fig:gan}, the \gan architecture typically comprises two neural networks, a generator $G$ and a discriminator $D$, in which $G$ learns to map from a latent distribution $p_z$ to the true data distribution $p_{\rm data}$, while $D$ discriminates between instances sampled from $p_{\rm data}$ and that generated by $G$. Here $G$'s objective is to ``fool'' $D$ by synthesizing instances that appear to have come from $p_{\rm data}$. This framework corresponds to solving a minimax two-player game with the following objective function:
\begin{equation}
\min_\theta \max_w \mathbb{E}_{x \sim p_{\text{data}}} [\log D_w(x)] + \mathbb{E}_ {z \sim p_z} [\log (1 - D_w(G_\theta(z)))]
\end{equation}
where $x$ and $z$ are sampled from $p_\text{data}$ and $p_{\bm{z}}$ respectively.

Since its advent, GAN finds applications in varied unsupervised and semi-supervised learning tasks~\cite{Chen:2016:infogan, Radford:2015:dcgan, Donahue:2016:adl, Kumar:2017:semiinv, Reed:2016:generative,Ledig:2016:superres,Yeh:2016:inpaint,Rajeswar:2017:adversarial}.
%
One line of work takes the trained discriminator as a feature
extractor and applies it in varied settings; the other line focuses on
the latent variable $z$ in the generator, either using regularization to make $z$ semantically
meaningful~\cite{Donahue:2016:adl, Chen:2016:infogan} or extracting information in the latent space
directly~\cite{Radford:2015:dcgan}.

Despite its simplicity, the original GAN formulation is
unstable and inefficient to train. A number of followup work
~\cite{Zhao:2016:energy, Chen:2016:infogan, Radford:2015:dcgan,
	Nowozin:2016:fgan,Arjovsky:2017:wgan, Gulrajani:2017:wganip}
propose new training procedures and network architectures to improve training stability and convergence rate. In particular, the Wasserstein generative adversarial network (WGAN)~\cite{Arjovsky:2017:wgan} and Improved Training of Wasserstein GANs~\cite{Gulrajani:2017:wganip}
attempt to minimize the earth mover distance between the synthesized distribution and the true distribution rather than their Jensen-Shannon divergence as in the original GAN formulation. Formally, improved WGAN adopts the following objective functions:
\begin{align}
& \argmin_w -D_w(G_\theta(z)) \\
& \argmin_\theta D_w(G_\theta(z)) - D_w(x) + \lambda \left( \left\Vert \nabla_{\hat{x}} D_w(\hat{x})\right\Vert_2 - 1\right)^2
\end{align}
Here, $\hat x = \alpha x + (1 - \alpha) G_\theta(z)$, in which $\alpha$ is
a random number sampled from $[0, 1]$. The regularization term enforces the norm of $D$'s gradients to be close to 1. This formulation is shown to allow more stable and faster training~\cite{Gulrajani:2017:wganip}.

In the following, without loss of generality, we will exemplify with the improved WGAN formulation to implement \system.

%
%
%

\subsection{Differential Privacy}

By providing theoretically guaranteed protection, differential privacy (DP)~\cite{Dwork:2009:tcc, Dwork:2006:icalp, Dwork:2014:book} is considered one of the strongest privacy definitions.

\vspace{3pt}
{\bf Definitions.}
We say a randomized mechanism $\mathcal{M} : \mathcal{D}^n \mapsto \mathcal{R}$ satisfies
  $\epsilon$-DP if for any adjacent databases
 $d, d' \in \mathbb{D}^n$ (which are identical except for one single data entry) and any subset $R \subseteq \mathcal{R}$, it holds that
 ${\rm Pr}[\mathcal{M}(d) \in R] \leq e^\epsilon {\rm Pr}[\mathcal{M}(d') \in R]$. A relaxed version, $(\epsilon, \delta)$-DP, allows the plain $\epsilon$-DP to be compromised
  with a small probability $\delta$:
 ${\rm Pr}[\mathcal{M}(d) \in R] \leq e^\epsilon {\rm Pr}[\mathcal{M}(d') \in R] + \delta$.
 In this work, we consider $(\epsilon, \delta)$-DP as the default privacy definition.

\vspace{3pt}
{\bf Mechanisms.}
For a given deterministic function $f$, DP is often achieved by injecting random noise into $f$'s output, while the noise magnitude is determined by $f$'s  sensitivity. If $f$ is vector-valued, i.e., $f:\mathcal{D}^n \mapsto \mathcal{R}^m$, its sensitivity is defined as:
 $\Updelta f = \max_{d, d'} \| f(d) - f(d') \|$, where $\Updelta f$ represents the maximum influence of a single data entry on $f$'s output, quantifying the (worst-case) uncertainty to be added to $f$'s output to hide the presence of that entry.

If $f$'s sensitivity is defined using $\ell_2$ norm, the Gaussian mechanism~\cite{Dwork:2014:book} is a common choice for randomizing $f$'s output:
\begin{equation}
	\nonumber
	\mathcal{M} (d) = f(d) +  \mathcal{N}(0, (\Updelta f)^2 \sigma \mathcal{I}),
\end{equation}
where $\mathcal{N}(0, (\Updelta f)^2 \sigma \mathcal{I})$ is a Gaussian
distribution with zero mean and covariance matrix $ (\Updelta f)^2 \sigma \mathcal{I}$ and $\mathcal{I}$ is the identity matrix.

\vspace{3pt}
{\bf Properties.} In addition, DP also features the following key properties, which we leverage in implementing \system.
\begin{myitemize}
\item {\em Closure under post-processing}. Any computation on the output of a DP-mechanism does not increase privacy loss.
\item {\em Sequential composability}. The composition of a sequence of DP-mechanisms is also DP-satisfying.
\end{myitemize}
We may use the composition theorems~\cite{Dwork:2014:book,Dwork:2010:boosting} to estimate the privacy loss after $k$-fold application of DP-mechanisms.

\section{Models and Algorithms}
\label{sec:model}

In this section, we present the basic design of \system, a generic framework for differentially private releasing of semantic-rich data.

\subsection{Overview}

Similar to the line of work on differentially private deep learning (e.g.,~\cite{Abadi:2016:dpdl}), \system achieves DP by injecting random noise in the optimization procedure (e.g., stochastic gradient descent~\cite{Song:2013:stochastic}). Yet, the GAN architecture, which comprises a generator $G$ and a discriminator $D$, presents unique challenges for realizing this idea. A na\"{i}ve solution is to inject noise in training both $G$ and $D$; the minimax game formulation however makes it difficult to tightly estimate the privacy loss, resulting in excessive degradation in the produced models.

We opt to add random perturbation only in training $D$. The rationale behind our design choice is as follows. First, as shown in Figure~\ref{fig:gan}, the real data is directly accessible only by $D$; thus, it suffices to control the privacy loss in training $D$. Second, in comparison with $G$, which often employs building blocks such as batch normalizations~\cite{Ioffe:2015:bn} and residual layers~\cite{He:2016:resnet, He:2016:imapresnet} in order to generate realistic samples, $D$ often features a simpler architecture and a smaller number of parameters, which make it possible to tightly estimate the privacy loss.

%

After deciding where to enforce privacy protection, next we present the basic construct of \system, as sketched in Algorithm \ref{alg:basic}. At a high level, \system is built upon the improved WGAN framework and enforces DP by injecting random noise in updating the discriminator $D$. Specifically, when computing $D$'s gradients with respect to a real sample $x$ (line 7), we first clip the gradients by a threshold $C$ (line 8), ensuring that the sensitivity is bounded by $C$; we then add random noise sampled from a Gaussian distribution. Additionally, we use a privacy accountant $\mathcal{A}$ similar to~\cite{McSherry:2009:PIQ:1559845.1559850} to track the cumulative privacy loss. This process iterates until convergence or exceeding the privacy
budget (line 14).

%
%


 \SetKwInput{Require}{Require}
 \begin{algorithm}


 	\caption{Basic \system}
 	 \label{alg:basic}
 	\KwIn{$n$ - number of samples; $\lambda$ - coefficient of gradient penalty; $n_\text{critic}$ - number of critic iterations
 		per generator iteration; $n_\text{param}$ - number of discriminator's parameters; $m$ - batch size; ($\alpha,
 		\beta_1, \beta_2$) - Adam hyper-parameters; $C$ - gradient clipping bound; $\sigma$ - noise scale; ($\epsilon_0$, $\delta_0$) - total privacy budget}
  \KwOut{differentially private generator $G$}
 	\While{$\theta$ has not converge} {
 		\For {$t = 1, \cdots, n_\text{critic}$} {
 			\For {$i = 1, \cdots, m$} {
 				sample $x \sim p_{\rm data}$, $z \sim p_z$, $\rho \sim \mathcal{U}\left[0, 1\right]$ \;
 				$\hat{x} \gets \rho x + (1 - \rho)  G(z) $ \;
 				$\ell^{(i)} \gets D(G (z)) - D(x) +
 				\lambda \left( \left\Vert \triangledown_{\hat{x}}  D( \hat{x} )
 				\right\Vert_2 - 1 \right)^2$ \;
                %
                \tcp{computing discriminator's gradients}
                $g^{(i)} \leftarrow \triangledown_{w} \ell^{(i)}$\;
                \tcp{clipping and perturbation ($\xi \sim \mathcal{N}\left(0, (\sigma C\right)^2\mathcal{I})$)}
                $g^{(i)} \gets g^{(i)} / \max(1, || g^{(i)} ||_2/C ) + \xi$\;
        	}
            \tcp{updating privacy accountant}
 		 				update $\mathcal{A}$ with $(\sigma, m, n_\text{param})$ \;
            \tcp{updating discriminator}
 		$w \gets \text{Adam}\left( \frac{1}{m} \sum_{i = 1}^m g^{(i)},
 		w, \alpha, \beta_1, \beta_2 \right)$\;
 	}
 			sample $\{ z^{(i)} \}_{i = 1}^m \sim p_z$ \;
            \tcp{updating generator}
            $\theta \gets \text{Adam}\left( \nabla_\theta \frac{1}{m} \sum_{i = 1}^m
 			-D( G(z^{(i)}) ) , \theta, \alpha, \beta_1, \beta_2
 			\right)  $\;
            \tcp{computing cumulative privacy loss}
 			$\delta \gets$ query $\mathcal{A}$ with $\epsilon_0$\;
            \lIf {$\delta > \delta_0$} {
 				break}}
        \Return{G}
 \end{algorithm}

\subsection{Privacy Analysis}

A key component of \system is to keep track the cumulative privacy loss during the course of training, i.e., privacy accountant $\mathcal{A}$, which integrates two building blocks:
moments accounting and sub-sampling. Next we elaborate on each component.

\subsubsection*{\bf Moments Accounting}

In~\cite{Abadi:2016:dpdl}, Abadi {\em et al.} propose moments accounting, a privacy accounting method, which provides tighter estimation of the privacy loss than the composition theorems. Specifically,
consider the privacy loss as a random
variable $Z$, which is defined as:
\begin{equation}
    \nonumber
Z(o; \mathcal{M}, d, d^\prime) = \log \frac{{\rm Pr}[\mathcal{M}(x) = o]}{
	{\rm Pr}[\mathcal{M}(d^\prime) = o]}
\end{equation}
where $d, d^\prime \in \mathcal{D}^n$ are two neighboring datasets, $\mathcal{M}$ is the random
mechanism, and $o \in \mathcal{R}$ is an outcome.

The privacy loss can be estimated by bounding the $\lambda$--th
moment of $Z$, which is calculated
via evaluating the moment generating function of $Z$ at $\lambda$:
\begin{equation}
    \nonumber
\alpha_\mathcal{M} (\lambda; d, d^\prime) =
\log \mathbb{E}_{o \sim \mathcal{M}(d)} \left[\exp(
\lambda Z(o; \mathcal{M}, d, d'))\right]
\end{equation}
To enforce DP, one needs to consider
$\alpha_\mathcal{M}$ across all possible $d, d'$, i.e.,
$\alpha_\mathcal{M} \triangleq \max_{d, d'} \alpha_\mathcal{M}
(\lambda; d, d' )$.

Using Markov's inequality, it can be proved that for any $\epsilon > 0$, $\mathcal{M}$
satisfies $(\epsilon, \delta)$-DP for
$\delta = \min_{\lambda} ( \alpha_\mathcal{M} - \lambda \epsilon )$~\cite{Abadi:2016:dpdl}.
Besides, if $\mathcal{M}$ is the composition of a sequence of
sub-mechanisms $\{\mathcal{M}_j\}_{j=1}^J$, it holds that
$\alpha_\mathcal{M}(\lambda) \leq
\sum_{j=1}^J \alpha_{\mathcal{M}_j} (\lambda)$.
In tracking the privacy loss, we apply numerical integration to compute $\alpha_\mathcal{M} (\lambda)$.

\subsubsection*{\bf Sub-sampling}
During each iteration of training $D$, we sample a batch of examples from the real dataset (line 4). The randomness due to sampling adds another level of privacy protection.
According to the privacy amplification
theorems~\cite{Beimel:2014:sample, Kasiviswanathan:2011:sample}, this sampling procedure achieves
$(\mathcal{O}(q \epsilon, q \delta))$-DP per iteration with respect to
the whole dataset where $q = m/n$ is the sampling ratio per batch,
$\sigma=\sqrt{2\log(1.25/\delta)}/\epsilon$, and $\epsilon \leq 1$.

Using moments accounting~\cite{Abadi:2016:dpdl}, it can be proved that Algorithm~\ref{alg:basic} is $( \mathcal{O} ( q\epsilon\sqrt{t}), \delta)$-DP,
where $t$ is the total number of iterations in the main loop,
if the noise scale $\sigma$ and the clipping threshold $C$ are chosen appropriately.

\begin{theorem}
  \label{theorem:basic}
  Algorithm~\ref{alg:basic} is $( \mathcal{O} ( q\epsilon\sqrt{t}), \delta)$-DP,
  where $t$ is the total number of iterations in the main loop,
  if the noise scale $\sigma$ and the clipping threshold $C$ are chosen appropriately.
\end{theorem}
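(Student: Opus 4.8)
The plan is to reduce the privacy analysis of the whole algorithm to that of the sequence of discriminator updates, and then apply the moments accountant exactly as in Abadi \emph{et al.}~\cite{Abadi:2016:dpdl}. First, observe that the real dataset $p_{\rm data}$ is touched only on lines~7--9, inside the critic loop; the generator update (line~13) and the returned model $G$ are computed from the data solely through the (already perturbed) discriminator parameters $w$. Hence, by closure of $(\epsilon,\delta)$-DP under post-processing, it suffices to bound the cumulative privacy loss of the $t$ noisy-gradient computations $\{\mathcal{M}_j\}_{j=1}^t$ that update $w$ (absorbing the factor $n_\text{critic}$ into $t$).

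Second, I would show that each $\mathcal{M}_j$ is an instance of the sub-sampled Gaussian mechanism. Gradient clipping forces $\|g^{(i)}\|_2 \le C$ for every example, so replacing one entry of the dataset changes the clipped per-example contribution by at most $2C$ in $\ell_2$ norm; after the averaging on line~11 together with the Gaussian perturbation on line~9, $\mathcal{M}_j$ is (up to a change of variables rescaling the per-example noise $\mathcal{N}(0,(\sigma C)^2\mathcal{I})$ to an equivalent noise on the averaged gradient) the Gaussian mechanism of noise multiplier $\Theta(\sigma)$ applied to a query of $\ell_2$-sensitivity $\Theta(C/m)$, composed with the uniform sub-sampling of a batch of size $m$ from $n$ records, i.e. sampling ratio $q=m/n$. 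For such a mechanism I would invoke the key moment bound of~\cite{Abadi:2016:dpdl}: reducing to the one-dimensional worst case, the privacy-loss random variable has log moment generating function
\begin{equation}
  \nonumber
  \alpha_{\mathcal{M}_j}(\lambda) \;\le\; \frac{q^2\lambda(\lambda+1)}{(1-q)\sigma^2} + O\!\left(\frac{q^3\lambda^3}{\sigma^3}\right) \;=\; O\!\left(\frac{q^2\lambda^2}{\sigma^2}\right),
\end{equation}
valid whenever $q$ is small and $\lambda \le \sigma^2\log\frac{1}{q\sigma}$ (this range restriction is part of the ``$\sigma, C$ chosen appropriately'' hypothesis).

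Third, I would compose and convert back to $(\epsilon,\delta)$-DP. By the composability of the moments, $\alpha_{\mathcal{M}}(\lambda) \le \sum_{j=1}^t \alpha_{\mathcal{M}_j}(\lambda) = O(tq^2\lambda^2/\sigma^2)$. The tail bound stated in the excerpt then gives, for any target $\epsilon'$, that $\mathcal{M}$ is $(\epsilon',\delta)$-DP with $\delta = \min_\lambda \exp(\alpha_{\mathcal{M}}(\lambda)-\lambda\epsilon')$; optimizing over $\lambda$ (the minimizer is $\lambda^\star = \Theta(\epsilon'\sigma^2/(tq^2))$, which must lie in the admissible range, again constraining $\sigma$) yields $\log(1/\delta) = \Omega(\epsilon'^2\sigma^2/(tq^2))$, i.e. $\epsilon' = O\!\big(q\sqrt{t}\cdot\sqrt{\log(1/\delta)}/\sigma\big)$. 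Substituting the per-iteration calibration $\sigma = \sqrt{2\log(1.25/\delta)}/\epsilon$ from the sub-sampling paragraph makes $\sqrt{\log(1/\delta)}/\sigma = \Theta(\epsilon)$, so $\epsilon' = \mathcal{O}(q\epsilon\sqrt{t})$, which together with the post-processing reduction proves the claim.

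The main obstacle is the second step: establishing the $O(q^2\lambda^2/\sigma^2)$ moment bound, which is where sub-sampling amplification is actually exploited. This requires analysing the mixture $(1-q)\mathcal{N}(0,\sigma^2)+q\mathcal{N}(1,\sigma^2)$ against $\mathcal{N}(0,\sigma^2)$, expanding $\mathbb{E}[\exp(\lambda Z)]$ by the binomial theorem, and carefully bounding the higher-order terms so that only the $q^2$ term survives in the regime of small $q$ --- a delicate calculation with several case splits on the sign of the privacy loss. A secondary technical point is justifying the reduction that adding i.i.d.\ noise to each clipped per-example gradient before averaging (as written on lines~9 and~11) is equivalent, for privacy-accounting purposes, to a single Gaussian mechanism on the averaged gradient with a correspondingly adjusted noise multiplier, and tracking the factor-$2$ in the sensitivity under the replace-one-entry notion of adjacency; both only affect constants hidden in the $\mathcal{O}(\cdot)$.
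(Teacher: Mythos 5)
Your proposal is correct and follows essentially the same route as the paper's proof: both reduce to the per-iteration sub-sampled Gaussian mechanism, invoke the moment bound $\alpha_{\mathcal{M}_j}(\lambda) \le \frac{q^2\lambda(\lambda+1)}{(1-q)\sigma^2} + \mathcal{O}(q^3\lambda^3/\sigma^3)$ from Abadi \emph{et al.}, compose the log-moments over the $t$ iterations, and convert back via the tail bound $\delta = \min_\lambda(\alpha_{\mathcal{M}} - \lambda\epsilon)$ with an optimization over $\lambda$ under the admissibility constraint on $\sigma$. The only additions beyond the paper's argument are your explicit post-processing step for the generator update and the per-example-noise-to-averaged-gradient change of variables, both of which the paper leaves implicit.
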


\begin{proof}

We have the following facts about moments accounting, Gaussian mechanism, and random sampling~\cite{Abadi:2016:dpdl}:
\begin{myitemize}

\item (1) Let $\mathcal{M}$ be the composition of a sequence of sub-mechanisms $\{\mathcal{M}_j\}_{j=1}^J$, it holds that $\alpha_\mathcal{M}(\lambda) \leq \sum_{j=1}^J \alpha_{\mathcal{M}_j} (\lambda)$.

\item (2) Using Markov's inequality, we have for any $\epsilon > 0$, $\mathcal{M}$
  satisfies $(\epsilon, \delta)$-DP for
  $\delta = \min_{\lambda} ( \alpha_\mathcal{M} - \lambda \epsilon )$.

\item (3) Consider a function $f$ which maps a data sample to a real-valued vector, with its output bounded by $||f||_2\leq 1$. Let $\sigma \geq 1$ and  $\mathcal{I}$ be a set of samples from $[n]$ where each $i \in \mathcal{I}$ is selected from $[n]$ independently with probability $q \leq \frac{1}{16\sigma}$. Then for any positive integer $\lambda \leq -\sigma^2 \ln (q\sigma)$, the mechanism $\mathcal{M}(d) = \sum_{i \in \mathcal{I}} f(d_i) + \mathcal{N}(0, \sigma^2\mathbf{I})$ satisfies
\begin{displaymath}
\alpha_\mathcal{M}(\lambda) \leq \frac{q^2 \lambda (\lambda +1)}{(1-q)\sigma^2} + \mathcal{O}(q^3 \lambda^3/\sigma^3)
\end{displaymath}

\end{myitemize}

Assume that $\sigma$ and $\lambda$ satisfy the condition in (3). The log-moment of Algorithm\,\ref{alg:basic} is bounded by $\alpha(\lambda) \leq q^2\lambda^2 t/\sigma^2$, according to (2) and (3). To ensure that Algorithm\,\ref{alg:basic} satisfies $(\bar{\epsilon}, \bar{\delta})$-DP, it suffices to have (i) $q^2\lambda^2 t/\sigma^2 \leq \lambda \bar{\epsilon}/2$, (ii) $\exp(-\lambda \bar{\epsilon}/2) \leq \bar{\delta}^2$, and (iii) $\lambda \leq -\sigma^2 \log(q\sigma)$.

With easy calculation, it can be verified that there exist two constants $c_1$ and $c_2$, such that when $\bar{\epsilon} = c_1 q^2 t$ and $\sigma = c_2 q\sqrt{-\log\bar{\delta}}/\bar{\epsilon}$, all the aforementioned conditions are met.
\end{proof}

\section{Optimizations}
\label{sec:opt}

The GAN formulation is known for its training stability issue~\cite{Gulrajani:2017:wganip}. This issue is even more evident in the \system framework, as random noise is injected in each training step. In our empirical study (Section~\ref{sec:eval}), it is observed that the basic \system suffers a set of drawbacks.
\begin{myitemize}
	\item Its synthesized data is often of low quality, e.g., unrealistic looking images.
	\item It converges slower than its regular GAN counterpart, resulting in excessive privacy loss, and sometimes even diverges.
	\item Its framework is fairly rigid, unable to take advantage of extra resources, e.g., a small amount of public data.

	%
	%
	%
\end{myitemize}

 Here we propose a suite of optimization strategies that significantly improve \system's training stability and convergence rate. Specifically, we enhance the basic \system along three directions.
\begin{myitemize}
\item Parameter grouping - By carefully grouping the parameters and perform stratified clipping over different groups, we strike a balance between convergence rate and privacy cost.
\item Adaptive clipping - By monitoring the change of gradient magnitudes, we dynamically adjust the clipping bounds to achieve faster convergence and stronger privacy.
\item Warm starting - By initializing the model with a good starting point, we boost up the convergence and save the privacy budget for critical iterations.
\end{myitemize}
Next we detail each of these optimization strategies.


\subsection{Parameter Grouping}

As shown in Algorithm~\ref{alg:basic}, the DP constraint essentially influences the training in two key operations (line 8): clipping - the norm of gradients is truncated by an upper bound $C$, and perturbation - random noise is added to the gradients. We propose to explore the opportunities to optimize these two critical operations.

In Algorithm~\ref{alg:basic}, the gradients of all the parameters are grouped together to compute the norm. This global clipping scheme minimizes the privacy budget spent in each iteration, but introduces excessive random noise for some parameters, causing slow convergence. At the other end of the spectrum, one may clip the gradient of each parameter with a parameter-specific clipping bound, which may reduce the overall amount of random noise, but at the cost of privacy budget. Here we propose two alternative grouping strategies that strike a balance between convergence rate and privacy loss per iteration.

%
%

 \subsubsection*{\bf Weight-Bias Separation} In most GAN architectures (e.g., convolutional layers and fully connected layers),  there are two types of parameters, weights and biases. For example, a fully connected layer models a linear function $f(x) = w\cdot x + b$ where $w$ and $b$ are the weight and bias parameters respectively. In our empirical study, it is observed that the magnitudes of the biases' gradients are often close to zero, while the magnitudes of the weights' gradients are much larger. Thus, our first strategy is to differentiate weight and bias parameters and to group the gradients of all the bias parameters together for the clipping operation. Given the large number of bias parameters, under the same amount of overall privacy budget, this strategy almost doubles the allowed number of iterations, with
little influence on the convergence rate (details in Section~\ref{sec:eval}).

%

 \subsubsection*{\bf Weight Clustering} While it is natural to group the bias parameters together as many of them are close to zero, the grouping of the weight parameters is much less obvious. Here
 we propose a simple yet effective strategy to stratify and cluster the weight parameters. Assuming that we have the optimal parameter-specific clipping bound $\{c(g_i)\}_{i}$ for each weight's gradient $\{g_i\}_{i}$  (we will show how to achieve this shortly), we then cluster these parameters into a predefined number of groups using a hierarchical clustering procedure, as sketched in Algorithm~\ref{alg:grouping}.

 Specifically, starting with each gradient forming its own group (line 1), we recursively find two groups $G, G'$ with the most similar clipping bounds and merge them to form a new group (line 3-4). As we use $\ell_2$ norm, the clipping bound of the newly formed group is computed as $\sqrt{c(G)^2 + c(G')^2}$.

  \begin{algorithm}
  	\KwIn {$k$ - targeted number of groups; $\{c(g_i)\}_i$ - parameter-specific gradient clipping bounds}
	\KwOut{$\mathcal{G}$ - grouping of parameters}
    %
    %
  	$\mathcal{G} \gets \{(g_i: c(g_i))\}_i$\;
  	\While{$|\mathcal{G}| > k $} {
  			$G, G' \gets \argmin_{G, G' \in \mathcal{G}}
  			\max \left(
  				\frac{c(G)}{c(G')}
  			 ,
  			 	 \frac{c(G')}{c(G)}
  			 \right)$ \;
             merge $G$ and $G'$ with clipping bound as $\sqrt{c(G)^2 + c(G')^2}$\;
   		}
   	\Return $\mathcal{G}$
  \caption{Weight-Clustering}
  \label{alg:grouping}
  \end{algorithm}

\subsection{Adaptive Clipping}
In Algorithm \ref{alg:basic}, the gradient clipping bound $C$ is a hyper-parameter
that needs careful tuning. Overly small $C$ amounts to excessive truncation of the gradients, while overly large $C$ is equivalent to overestimating the sensitivity, both resulting in slow convergence and poor utility. However, within the improved WGAN framework, it is challenging to find a near-optimal setting of $C$, due to reasons including: (i) the magnitudes of the weights and biases and their gradients vary greatly across different layers; and (ii) the magnitudes of the gradients are constantly changing during the training.

To overcome these challenges, we propose to constantly monitor the magnitudes of the gradients before and during the training, and set the clipping bounds based on the average magnitudes. Specifically, we assume that besides the private data $\mathcal{D}_{\rm pri}$ to train the model, we have access to a small amount of public data $\mathcal{D}_{\rm pub}$ which is available in many settings. During each training step, we randomly sample a batch of examples from $\mathcal{D}_{\rm pub}$, and set the clipping bound of each parameter as the average gradient norm with respect to this batch. In our empirical study (Section~\ref{sec:eval}), we find that this adaptive clipping strategy leads to much faster training convergence and higher data utility.



\subsection{Warm Starting}

It is expected that due to the random noise injected in each training step, the GAN with the DP constraint often converges slower than its vanilla counterpart, especially during its initial stage. To boost up the convergence rate, we propose to leverage the small amount of public data $\mathcal{D}_{\rm pub}$ to initialize the model. Specifically, using $\mathcal{D}_{\rm pub}$, we first train a few iterations without the DP constraint, and then continue the training using $\mathcal{D}_{\rm pri}$ under the DP constraint.

This strategy provides a warm start for \system. It helps find a satisfying starting point, which is essential for the model to converge, and also saves a significant amount of privacy budget for more critical iterations.

An astute reader may point out that since there is public data available, one may just use the public data for training. The issue is that the public data is often fairly limited, which may not be sufficient to train a high-quality GAN. Further, the large amount of private data is valuable for improving the diversity of the samples synthesized by the generator (details in Section~\ref{sec:eval}).

%
%

\SetKwProg{myproc}{Procedure}{}{}
 \begin{algorithm}
	\caption{Advanced \system}
	\label{alg:advanced}
	\KwIn {$n$ - number of samples; $\mathcal{D}_{\rm pub}$ - public dataset;
	$\lambda$ - coefficient of gradient penalty; $n_\text{critic}$ - number of critic iterations
		per generator's iteration; $n_\text{param}$ - number of discriminator's parameters; $m$ - batch size for training GAN;
		$m_\text{pub}$ - batch size for estimating norms of gradients;
		($\alpha,
		\beta_1, \beta_2$) - Adam hyper-parameters; $C$ - gradient clipping bound; $\sigma$ - noise
		scale; ($\epsilon_0$, $\delta_0$) - overall privacy target; $k$ - number of parameter groups}
	\KwOut{$G$ - differentially private generator}
	\tcp{warm starting}
	$\left(w, \theta \right) \gets $ train regular improved
	 WGAN using $\mathcal{D}_{\rm pub}$\;
	\While{$\theta$ has not converged} {
		\For {$t = 1, \cdots, n_\text{critic}$} {
		 \tcp{computing gradients of public data}
			sample $\{\bar{x}_i\}_{i=1}^{m_{\rm pub}} \sim \mathcal{D}_\text{pub}$ \;
			$\{\bar{g}^{(i)}\}^{m_\text{pub}}_{i = 1} \gets$ Improved WGAN-Gradient~
		($\{\bar{x}_i\}_{i=1}^{m_{\rm pub}}$, $m_{\rm pub}$)\;
		\tcp{grouping parameters with similar clipping bounds}
						$\{(G_j, c_j)\}_{j = 1}^k \gets$ Weight-Clustering~
			($k$, $\{\bar{g}^{(i)}\}^{m_\text{pub}}_{i = 1}$)\;
			\tcp{computing gradients of real data}
			sample $\{x_i\}_{i=1}^m \sim p_{\rm data}$ \;
			$\{g^{(i)}\}^{m}_{i = 1} \gets$ Improved WGAN-Gradient~
		($\{x_i\}_{i=1}^{m}$, $m$)\;
			\For {$i = 1, \cdots, m$} {
				${g^{(i)}_j} \gets g^{(i)} \cap G_j$ for $j = 1, \cdots, k$ \;
				\For {$j = 1, \cdots, k$}{
					\tcp{clipping and perturbation $\xi \sim \mathcal{N}(0, (\sigma c_j)^2\mathcal{I})$}
					$g^{(i)}_j \gets g^{(i)}_j / \max(1, || g^{(i)}_j ||_2/c_j ) + \xi$\;
				}
				}
			\tcp{updating privacy accountant}
			update $\mathcal{A}$ with $(\sigma, m, k)$ \;
			\tcp{updating discriminator}
			$w_j \gets \text{Adam}( \frac{1}{m} \sum_{i = 1}^m g^{(i)}_j,
			w_j, \alpha, \beta_1, \beta_2 )$ for $j = 1, \cdots, k$\;
			$w \gets \left\{w_j\right\}_{j = 1}^k$\;

		}
		sample $\{ z^{(i)} \}_{i = 1}^m \sim p_z$ \;
		\tcp{updating generator}
		$\theta \gets \text{Adam}( \nabla_\theta \frac{1}{m} \sum_{i = 1}^m
		-D( G(z^{(i)})) , \theta, \alpha, \beta_1, \beta_2
		)$\;
		\tcp{computing cumulative privacy loss}
		$\delta \gets$ query $\mathcal{A}$ with $\epsilon_0$\;
		\lIf {$\delta \geq \delta_0$} {
			break
		}
		}
		\Return{G}\;
		\myproc{\rm Improved WGAN-Gradient~($\{x_i\}_{i=1}^m, m$)}
		{
			\For {$i = 1, \cdots, m $}{
				sample $z \sim p_z$, $\rho \sim \mathcal{U}[0, 1]$\;
				$\hat{x} \gets \rho x_i + (1 - \rho)  G(z) $ \;
				$\ell^{(i)} \gets D(G (z)) - D(x_i) +
				\lambda ( || \triangledown_{\hat{x}}  D( \hat{x})
				||_2 - 1 )^2$ \;
				$g^{(i)} \leftarrow \triangledown_{w} \ell^{(i)}$\;
			}
			\Return $\{ g^{(i)} \}_{i = 1}^m$\;
		}
\end{algorithm}

\subsection{Advanced Algorithm}

%


Putting everything together, Algorithm~\ref{alg:advanced} sketches the
enhanced \system framework. Different from Algorithm~\ref{alg:basic}, we initialize the
model with a warm starting procedure using the public data $\mathcal{D}_\text{pub}$ (line 1). During each training iteration, we first estimate the clipping bound of each parameter using $\mathcal{D}_\text{pub}$ (line 4-5), then group the parameters into $k$ groups $\{G_j\}_{j=1}^k$, each $G_j$ sharing similar clipping bound $c_j$ (line 6). In our current implementation, we use the average clipping bounds in $G_j$ to estimate $c_j$. We then perform group-wise clipping and perturbation (line 9-12). The remaining part is similar to Algorithm~\ref{alg:basic}. The process iterates until the generator's parameters converge or the privacy budget is used up (line 19).

Astute readers may raise the concern about possible additional privacy loss due to the multiple optimization strategies. We have the following theorem.

\begin{theorem}
  Algorithm~\ref{alg:advanced} is $( \mathcal{O} ( q\epsilon\sqrt{t}), \delta)$-DP,
  where $t$ is the total number of iterations in the main loop,
  if the noise scale $\sigma$ and the clipping threshold $C$ are chosen appropriately.
\end{theorem}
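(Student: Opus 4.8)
\emph{Proof plan.} The plan is to show that the three optimizations introduced in Algorithm~\ref{alg:advanced} do not increase the privacy loss beyond a constant factor, so that the analysis reduces to that of Theorem~\ref{theorem:basic}. The key observation is that the warm-start initialization $(w,\theta)$ (line~1), the parameter grouping $\{G_j\}_{j=1}^k$, and the per-group clipping bounds $\{c_j\}_{j=1}^k$ (lines~4--6) are all computed \emph{exclusively} from the public dataset $\mathcal{D}_{\rm pub}$, which is assumed to be separate from --- hence statistically independent of --- the private dataset $\mathcal{D}_{\rm pri}$. Therefore changing a single record of $\mathcal{D}_{\rm pri}$ leaves all of these quantities unchanged; conditioning on $\mathcal{D}_{\rm pub}$, they may be treated as fixed, data-independent constants, and by closure of \priv under post-processing it suffices to bound the privacy loss incurred by the only operations that touch $\mathcal{D}_{\rm pri}$, namely the sub-sampled, clipped, and perturbed gradient updates on lines~7--12.

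Next I would analyze one critic iteration. After rescaling the $j$-th group gradient by $1/c_j$, the update restricted to that group's coordinate block is exactly a sub-sampled Gaussian mechanism of the form $\sum_{i\in\mathcal{I}} f(d_i) + \mathcal{N}(0,\sigma^2\mathcal{I})$ with $\|f(d_i)\|_2\le 1$ and sampling rate $q = m/n$, which is precisely the setting of fact~(3); concatenating all $k$ blocks (whose noise is drawn independently) and rescaling the whole vector by $1/\sqrt{k}$ yields a single such mechanism with effective noise scale $\sigma/\sqrt{k}$. Since the bound in fact~(3) does not depend on the output dimension, one iteration has log-moment at most $\tfrac{k\,q^{2}\lambda(\lambda+1)}{(1-q)\sigma^{2}} + \mathcal{O}(k^{3/2}q^{3}\lambda^{3}/\sigma^{3})$ for every admissible $\lambda$ --- the same bound as in Theorem~\ref{theorem:basic} up to the factor $k$, which is a fixed hyper-parameter.

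Composing over all iterations of the main loop via fact~(1) (the constant $n_{\rm critic}$ being absorbed into the constant), the overall log-moment of Algorithm~\ref{alg:advanced} is bounded by $\alpha(\lambda)\le c\,q^{2}\lambda^{2}t/\sigma^{2}$ for a constant $c$ depending only on $k$ (and $n_{\rm critic}$). From this point the argument is verbatim that of Theorem~\ref{theorem:basic}: by fact~(2) it suffices to choose $\sigma$ and the clipping bounds so that (i)~$c\,q^{2}\lambda^{2}t/\sigma^{2}\le\lambda\bar\epsilon/2$, (ii)~$\exp(-\lambda\bar\epsilon/2)\le\bar\delta$, and (iii)~$\lambda\le-(\sigma^{2}/k)\log(q\sigma/\sqrt{k})$ hold simultaneously for some integer $\lambda$, and a routine calculation shows there exist constants $c_1,c_2$ (now also depending on $k$) for which $\bar\epsilon = c_1 q^{2} t$ and $\sigma = c_2 q\sqrt{-\log\bar\delta}/\bar\epsilon$ satisfy all three, giving the claimed $(\mathcal{O}(q\epsilon\sqrt{t}),\delta)$-\priv guarantee.

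The main obstacle --- and essentially the only place the proof departs from that of Theorem~\ref{theorem:basic} --- is making precise the claim that the public-data-derived clipping bounds, parameter groups, and warm-start point contribute nothing to the privacy loss; this rests squarely on the modeling assumption that $\mathcal{D}_{\rm pub}$ and $\mathcal{D}_{\rm pri}$ are disjoint, so that the adjacency relation acts only on $\mathcal{D}_{\rm pri}$. The secondary point requiring care is the book-keeping for the $k$ disjoint parameter groups: one must check that the per-block rescaling legitimately brings each group under the hypotheses of fact~(3), and that the resulting factor $k$ is harmless because $k$ is constant (and, if a sharper bound were wanted, that replacing the per-group noise scales $\sigma c_j$ with a common isotropic scale would remove it entirely).
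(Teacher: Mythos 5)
Your proposal is correct, and it reaches the theorem by a genuinely different route at the one step where the two algorithms differ. The paper handles the $k$ disjoint parameter groups by asserting that each group's clipped-and-perturbed release is $(\epsilon,\delta)$-\priv and then invoking \emph{parallel composability} to conclude that releasing all $k$ groups together is still $(\epsilon,\delta)$-\priv, i.e.\ that the group-wise scheme costs nothing extra per iteration; it then reduces to Theorem~\ref{theorem:basic}. You instead rescale each block by $1/c_j$, concatenate, observe the result is a single sub-sampled Gaussian mechanism with sensitivity $\sqrt{k}$ (equivalently, effective noise scale $\sigma/\sqrt{k}$), and pay an explicit factor of $k$ in the per-iteration log-moment, which you then absorb into the constants of the final calculation. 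Your accounting is the more defensible of the two: parallel composition is a statement about mechanisms run on \emph{disjoint subsets of the data}, whereas here all $k$ groups are computed from the same sampled batch and merely occupy disjoint output coordinates, so the paper's claim of zero additional per-iteration cost does not follow from the cited property (the honest per-iteration cost is a constant factor worse, exactly as your bound says). Since $k$ is a fixed hyper-parameter, both arguments land on the same $(\mathcal{O}(q\epsilon\sqrt{t}),\delta)$ guarantee. You also make explicit --- where the paper only remarks that ``similar arguments apply'' --- that the warm start, the grouping, and the adaptive clipping bounds are functions of $\mathcal{D}_{\rm pub}$ alone, hence fixed under the adjacency relation on $\mathcal{D}_{\rm pri}$ and free of privacy cost; that is the right justification and worth stating. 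The one caveat to note is that your reduction, like the paper's, tacitly assumes the per-group noise scales are tied to the per-group clipping bounds as $\sigma c_j$ so that the rescaling argument goes through uniformly across blocks, which is indeed how Algorithm~\ref{alg:advanced} is written.
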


\begin{proof}
 Algorithm~\ref{alg:advanced} differs from Algorithm~\ref{alg:basic} mainly in its use of finer-grained clippings for different groups of parameters, which however does not cause additional privacy loss. Intuitively, thanks to the composability property of the moments accounting~\cite{Abadi:2016:dpdl}, the privacy loss due to applying parameter-specific clipping is completely accounted.

Next we prove that the strategy of weight clustering does not cause unaccounted privacy loss, while similar arguments apply to other optimization strategies as well.

In Algorithm\,\ref{alg:basic}, in the $i$-th iteration, the gradient $g^{(i)}$ is first clipped by a global bound $c$ and the random noise $\xi \sim \mathcal{N}(0, (\sigma c)^2\mathbf{I})$ is applied to $g^{(i)}$ to ensure $(\epsilon, \delta)$-DP, where $\sigma = \sqrt{2 \log(1.25/\delta)}/\epsilon$.

In Algorithm~\ref{alg:advanced}, $g^{(i)}$ is divide into $k$ sub-vectors $\{g_j^{(i)}\}_{j =1}^k$. Each sub-vector $g_j^{(i)}$ is clipped by a group-specific bound $c_j$ and the random noise $\xi_j \sim \mathcal{N}(0, (\sigma c_j)^2\mathbf{I})$ is applied, where $\sigma = \sqrt{2 \log(1.25/\delta)}/\epsilon$. Thus, releasing each $g_j^{(i)}$ satisfies $(\epsilon, \delta)$-DP. As $\{g_j^{(i)}\}_{j =1}^k$ are disjoint, applying the parallel composability property of DP~\cite{Dwork:2014:book}, releasing $\{g_j^{(i)}\}_{j =1}^k$ also satisfies $(\epsilon, \delta)$-DP.

\end{proof}

\section{Empirical Evaluation}
\label{sec:eval}

In this section, we empirically evaluate the proposed \system framework. The experiments are designed to answer four key questions that impact \system's practical use. First, is \system able to synthesize visually vivid image data, under the DP constraint? Second, does the synthesized data demonstrate sufficient quality and diversity, from a quantitative perspective? Third, does the synthesized data retain enough utility for concrete data analysis tasks? Finally, how do different optimization strategies influence \system's performance?
We begin with describing the experimental setting.

%

\subsection{Experimental Setting}

In our experiments, we use three benchmark datasets:
\begin{myitemize}
\item MNIST, which consists of 70K handwritten digit images of size $28\times28$, split into 60K training and 10K test samples.
\item CelebA, which comprises 200K celebrity face images of size $48\times 48$, each with 40 attribute annotations.
\item LSUN, which contains around one million labeled images of size $64\times 64$, for each of the 10 scene categories.
\end{myitemize}


For the MNIST and CelebA datasets, we split the training data (which is the entire dataset if no labeling information is considered) using the ratio of
$2:98$ as publicly available data $\mathcal{D}_{\rm pub}$ and private data $\mathcal{D}_{\rm pri}$ respectively. We train \system on
$\mathcal{D}_{\rm pri}$ under the DP constraint.
For the LSUN dataset, we consider two settings. First, we consider it as an unlabeled dataset and split it into $2:98$ as public data $\mathcal{D}_{\rm pub}$ and private data $\mathcal{D}_{\rm pri}$, which we denote as
LSUN-U. Second, we consider the label information of the dataset. We sample 500K images from each of the top 5 categories (in terms of number of images), which are then split into $2 : 98$ as $\mathcal{D}_{\rm pub}$ and $\mathcal{D}_{\rm pri}$ respectively. We refer to this dataset as LSUN-L.

The network architecture of \system is similar to~\cite{Gulrajani:2017:wganip}, which we adpat to each dataset.
The default setting of the parameters is as follows: the coefficient of gradient penalty $\lambda=10$,
the number of critic iterations per GAN's iteration $n_\text{critic} = 4$, the batch size $m = 64$.
The setting of the parameters specific to each dataset is summarized in Table~\ref{tab:expsetting},
where $(\alpha, \beta_1, \beta_2)$ are the hyper-parameters of the Adam optimizer, $(\epsilon, \delta)$ are the privacy budget, and $\sigma$ is the noise scale. The setting of $\sigma$ follows the setting in~\cite{Abadi:2016:dpdl}, which is considered sufficiently strict in typical applications.
 The last two hyper-parameters are for advanced \system: $k$ is the number of groups for weight clustering, and $t_\text{warm}$ is the number of iterations for warm starting with public data.


\begin{table}
	\centering
	\begin{tabular}{c | c c c c c c c c}
		Dataset & $\alpha$ & $\beta_1$ & $\beta_2$ & $\epsilon$ & $\delta$ & $
		\sigma$ & $k$ & $t_\text{warm}$ \\
		\hline
		MNIST & 0.002 & 0.5 & 0.9 & 4 & $10^{-5}$ & 1.086 & 5 & 300 \\
		CelebA & 0.002 & 0.0 & 0.9 & 10 & $10^{-5}$ & 0.543  & 6 & 800  \\
		LSUN-U  & 0.002 & 0.0 & 0.9 & 10 & $10^{-5}$ & 0.434 & 7 & 2400 \\
		LSUN-L  & 0.002 & 0.0 & 0.9 & 10 & $10^{-5}$ & 0.434 & 7 & 2000 \\
		\hline
	\end{tabular}
	\caption{Parameter setting for each dataset.}
	\label{tab:expsetting}
\end{table}



All the experiments are conducted on TensorFlow.

\subsection{Qualitative Evaluation}

 In this set of experiments, we qualitative evaluate the quality of the data synthesized by \system. Figure~\ref{fig:mnist},~\ref{fig:lsunbedroom},~\ref{fig:lsun10cat}, and ~\ref{fig:celeba48} show a set of synthetic samples generated by \system, which has been trained on the MNIST, LSUN-U, LSUN-L, and CelebA datasets respectively. It is noted that in all the cases, \system is able to generate visually vivid images of quality comparable to original ones, while, at the same time, providing strong privacy protection (see Table~\ref{tab:expsetting}).
\begin{figure*}[t]
	\centering
	\epsfig{width = 175mm, file = 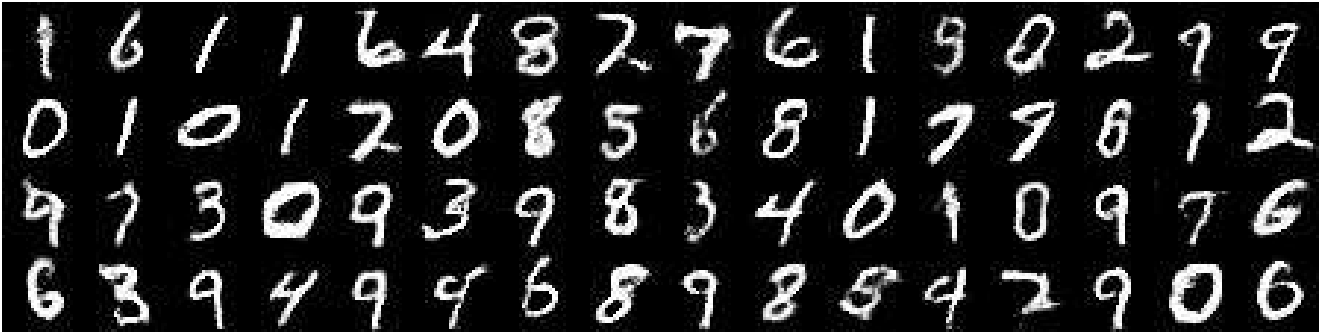}
	\caption{Synthetic samples for the MNIST dataset ($\epsilon=4, \delta \leq 10^{-5}$)}
	\label{fig:mnist}
\end{figure*}

\begin{figure*}[t]
	\centering
	\epsfig{width = 175mm, file =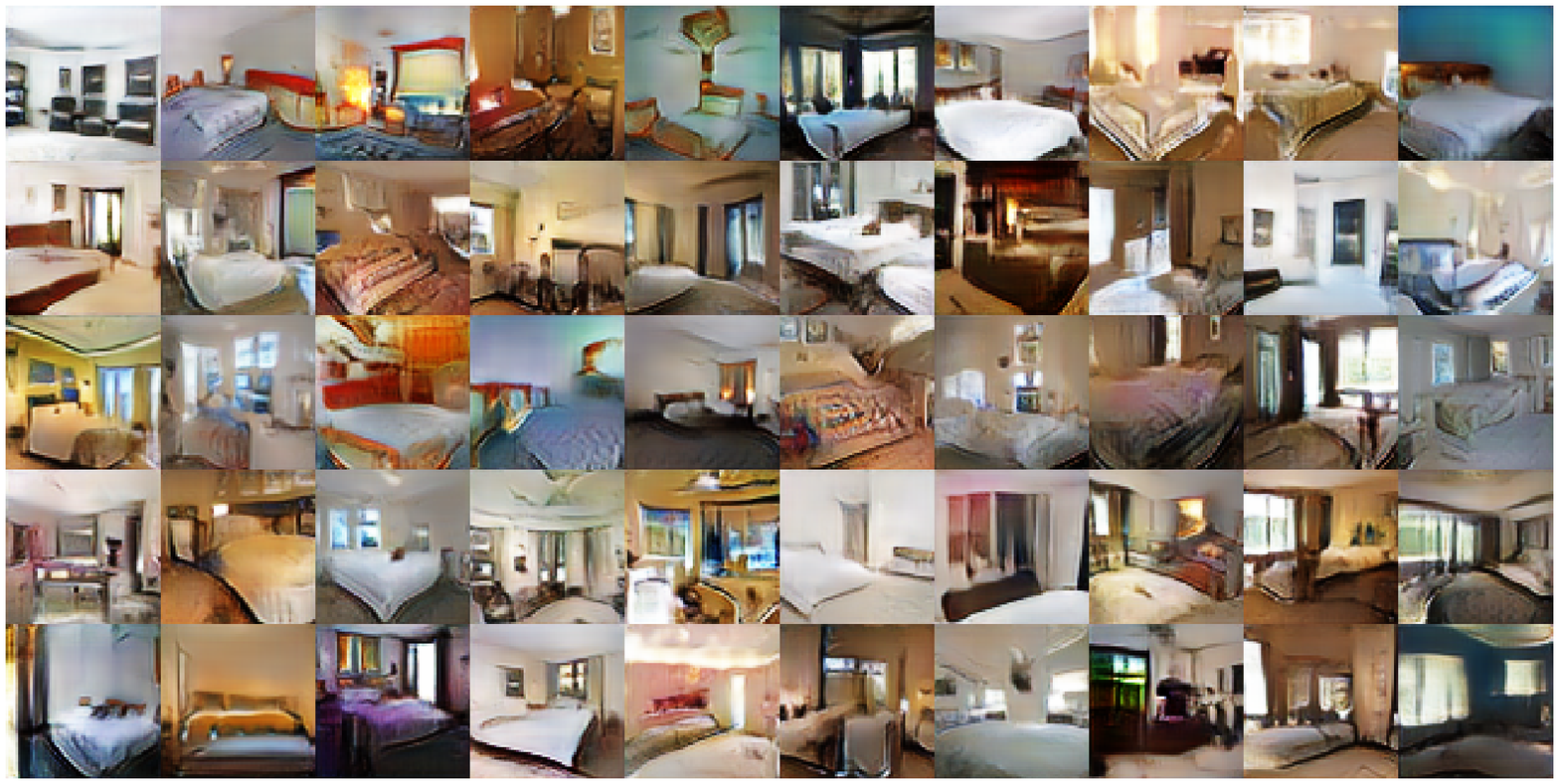}
	\caption{Synthetic samples for LSUN-U dataset ($\epsilon=10, \delta \leq 10^{-5}$)}
	\label{fig:lsunbedroom}
\end{figure*}

\begin{figure*}[t]
	\centering
	\epsfig{width = 175mm, file = 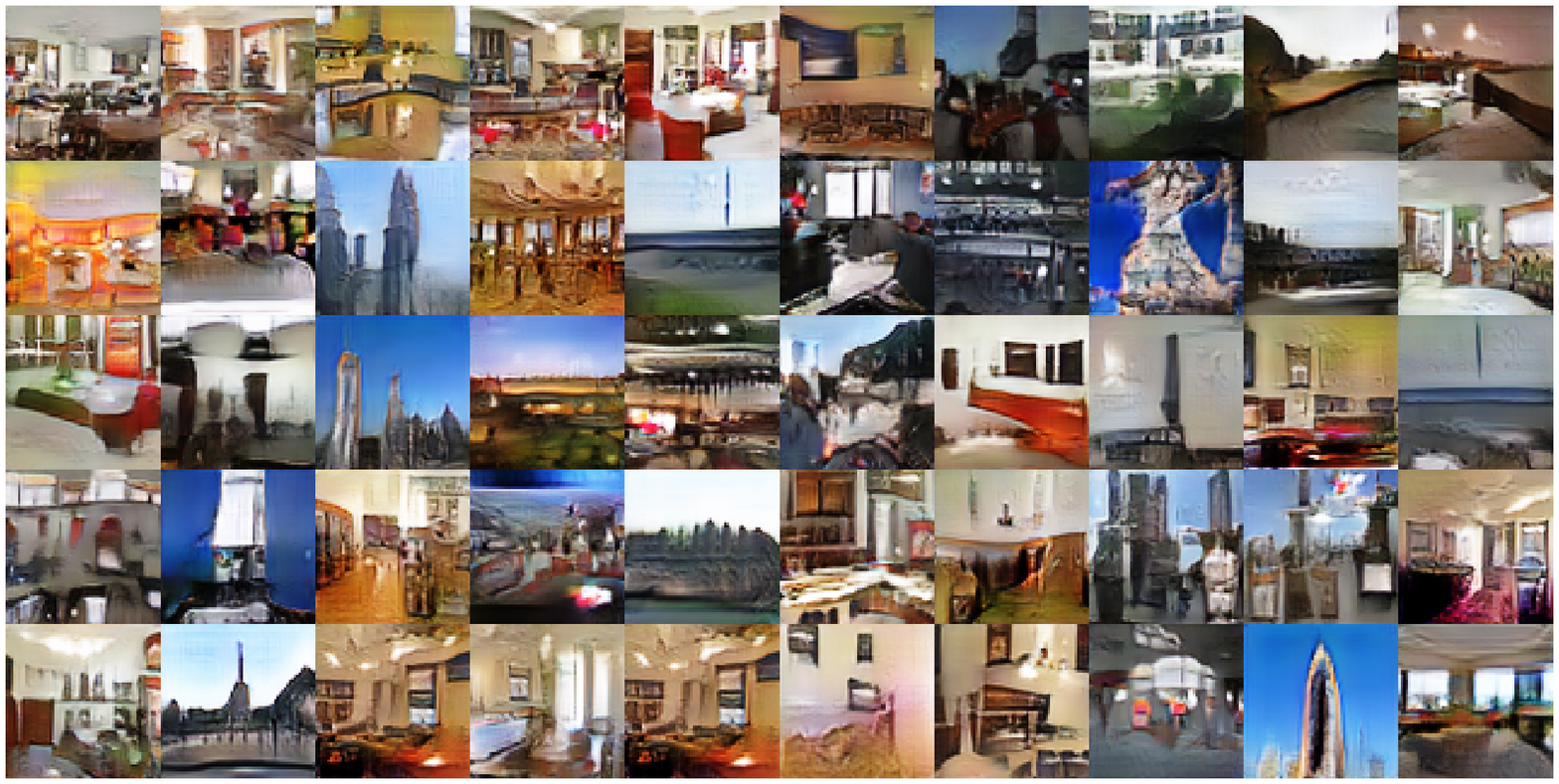}
	\caption{Synthetic samples for the LSUN-L dataset ($\epsilon=10, \delta \leq 10^{-5}$)}
	\label{fig:lsun10cat}
\end{figure*}

\begin{figure*}[t]
	\centering
	\epsfig{width = 175mm, file = 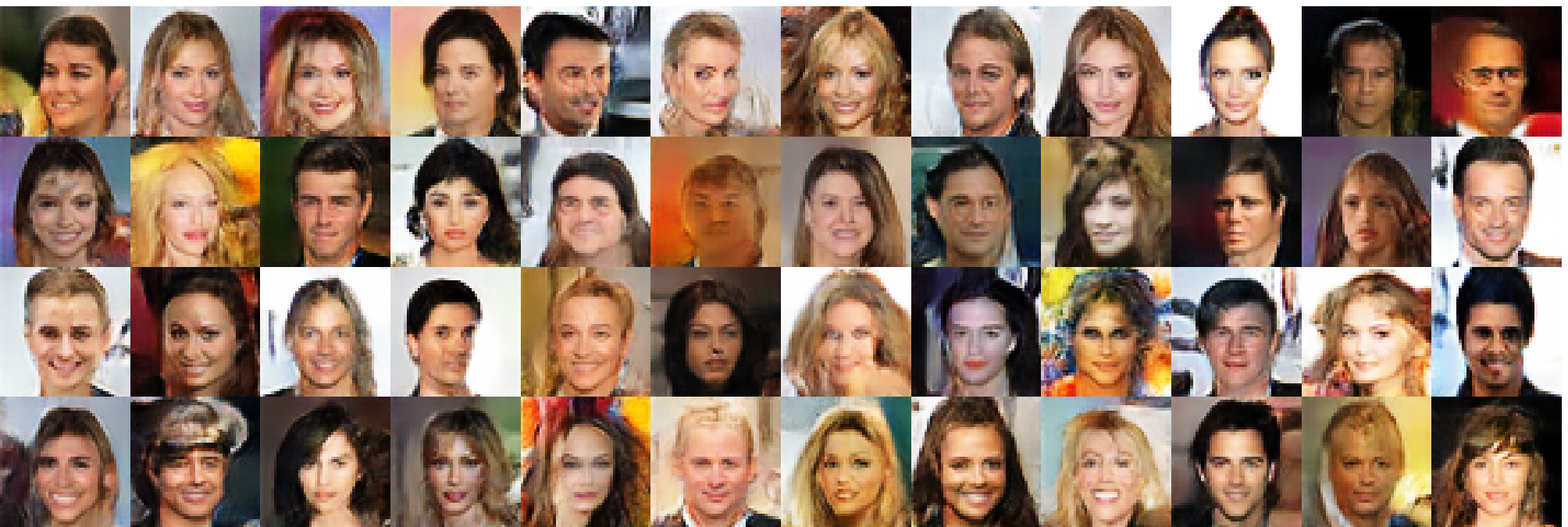}
	\caption{Synthetic samples for the CelebA dataset ($\epsilon=10, \delta \leq 10^{-5}$)}
	\label{fig:celeba48}
\end{figure*}

\subsection{Quantitative Evaluation}

Next we conduct quantitative evaluation of \system's performance. Specifically, we first compare the synthetic data against the real data in terms of their statistical properties, including Inception scores and Jensen-Shannon divergence; we then evaluate the quality of the synthetic data in semi-supervised classification tasks.

\subsubsection*{\bf Statistical Properties}

In \cite{Salimans:2016:improved}, Salimans {\em et al.} propose to use Inception score to measure the quality
of data generated by GAN. Formally, the Inception score\footnote{Even though the datasets here are not ImageNet, we still refer to
	Eqn.~\ref{equ:is} as Inception score in the following.} of a generator $G$ is defined as:
\begin{equation}
s(G) = \exp \left( \mathbb{E}_{x \sim G(z) }
{\rm KL}( {\rm Pr}(y|x) ||   {\rm Pr}(y)) \right)
\label{equ:is}
\end{equation}
Here, (i) $x$ is a sample generated by $G$.
(ii) ${\rm Pr}(y|x)$ is the conditional distribution imposed
by a pre-trained classifier
to predict $x$'s label $y$. If $x$ is similar to a real sample, we expect the entropy of ${\rm Pr}(y|x)$ to be small. (iii)
${\rm Pr}(y) = \int_{x} {\rm Pr}(y|x = G(z)
{\rm d}z$ is the marginal distribution of $y$. If $G$ is able to generate a diverse set of samples, we expect the entropy of ${\rm Pr}(y)$ to be large. Thus, by measuring the KL divergence of the two distributions, $s(G)$ captures both the quality and diversity of the synthetic data. For the MNIST and LSUN-L datasets, we use the entire training set to train baseline classifiers to estimate ${\rm Pr}(y|x)$. The classifiers are tuned to achieve reasonable performance on the validation sets (99.06\% for MNIST and 88.73\% for LSUN-L).


Table~\ref{tab:iscore} summarizes the Inception scores of synthetic data (generated by regular GAN and \system) and real data for the MNIST and LSUN-L datasets. It can be noticed that \system is able to synthesize data with Inception scores fairly close to the real data and that generated by regular GANs (without privacy constraints). For example, in the case of MNIST, the difference between the real data and the synthetic data by \system is less than 1.32.

\begin{table}
	\centering
	\begin{tabular}{ c | c c c  c  }
		Dataset & Setting & $n$ ($\times 10^6$)  &  $(\epsilon, \delta)$ & Score \\
		\hline
		\multirow{3}{*}{MNIST} & real & $0.06$ &  - & $9.96 \pm 0.03 $\\
		& GAN  & $0.06$ & -  & $9.05 \pm 0.03$ \\
		& \system & $0.05$ & $(4, 10^{-5})$ & $8.64 \pm 0.03 $\\
		\hline
		\hline
		\multirow{3}{*}{LSUN-L} & real & $2.50$ & - &  $4.16 \pm 0.01 $\\
		& GAN & $2.50$ &  - &  $3.11 \pm 0.01$ \\
		& \system & $2.45$ & $(10, 10^{-5}) $ &  $2.78 \pm 0.01$ \\
		\hline
	\end{tabular}
	\caption{Inception scores of real and synthetic data on the MNIST and LSUN-L datasets (with label information).}
	\label{tab:iscore}
\end{table}
\begin{table}
	\centering
	\begin{tabular}{c | c c c c}
		Dataset & Setting & $n$ ($\times 10^6$) & $(\epsilon,\delta)$ & Score \\
		\hline
		\multirow{3}{*}{CelebA} & real & $0.22 $ & - & $0.00 \pm 0.00$ \\
		& GAN &  $0.20 $ & - & $0.09 \pm 0.00$\\
		& \system &  $0.20 $ & $(10, 10^{-5}) $ & $0.28 \pm 0.00$ \\
		\hline
		\hline
		\multirow{3}{*}{LSUN-U} & real & $2.50 $ & - & $0.00 \pm 0.00$ \\
		& GAN & $2.50  $ & - & $0.25 \pm 0.00 $\\
		& \system & $ 2.45 $ & $(10, 10^{-5})$ & $0.29 \pm 0.00 $\\
		\hline
	\end{tabular}
	\caption{Jensen-Shannon scores of real and synthetic data on CelebA and LSUN-U datasets (without label information).}
	\label{tab:iscorewolab}
\end{table}

To measure \system's performance with respect to unlabeled data (e.g., CelebA and LSUN-U), we train another discriminator $D'$ using the real data and test whether $D'$ is able to discriminate the synthetic data. We consider two distributions: (i)
${\rm Pr}(y|x)$ is the conditional distribution that $D'$'s prediction about $x$'s source (real or synthetic) and (ii) $\mathcal{B}_p$ is a Bernoulli distribution with $p=0.5$. We use the
Jensen-Shannon divergence of the two distributions to measure the quality of the synthetic data:
\begin{displaymath}
	s(G) = \frac{1}{2} \text{KL} ( {\rm Pr}( y | x) || \mathcal{B}_p ) +
	\frac{1}{2} \text{KL} ( \mathcal{B}_p || {\rm Pr}( y | x ) )
\end{displaymath}
Intuitively, a smaller value of $s(G)$ indicates that $D'$ has more difficulty to discriminate the synthetic data, i.e., better quality of the data generated by $G$.

Table~\ref{tab:iscorewolab} summarizes the quality scores of the real and  synthetic data (regular GAN and \system) on the
CelebA and LSUN-U datasets. Observe that \system generates data of quality close to that by regular GAN (without privacy constraints), especially in the case of LSUN-U, i.e., 0.25 versus 0.29. This may be explained by that compared with CelebA, LSUN-U is a relatively larger dataset, enabling \system to better capture the underlying data distribution.

GANs that generate images with a single label or without a explicit classification task.
We can consider train another discriminator $D$ to produce a signal to identify if
the images come from generative distribution $G\left(p \left(z \right)\right)$~$
(z \sim p(z))$ or real data distribution $p_\text{data}$. We donate $y = 1$ if an image $x$
comes from real distribution, and $y = - 1$ if it is a generated one. Here we use
\textit{Jensen--Shannon divergence} to measure the distance between two distributions.
	Specifically, we take
\begin{equation}
	s(G) = \frac{1}{2} \text{KL} \left( p\left( y | x\right)  \middle\Vert q\left(y\right) \right) +
	 	\frac{1}{2} \text{KL} \left( q\left(y\right) \middle\Vert p\left( y | x \right) \right)
\end{equation}
to measure the quality of generated examples without an explicit supervised task.
Here $p\left(y | x\right)$ is the conditional
probability of a sample $x$ is coming from $p_\text{data}$, which is a Bernoulli distribution,
and $q(y)$ is a Bernoulli distribution with parameter $p = 0.5$. Thus,
the better the generator, the smaller its score, as even a good discriminator is not able to
decide if the sample $x$ is coming from the generator $G$ or $p_\text{data}$.
In table~\ref{tab:iscorewolab}, we show the quality of scores of unlabeled images with
CelebA and LSUN--Bedroom datasets.
\begin{table}
	\centering
	\begin{tabular}{c | l c c c}
		Dataset & Setting & $n$ ($\times 10^6$) & $(\epsilon,\delta)$ & Score \\
		\hline
		\multirow{3}{*}{CelebA} & real & $0.22 $ & - & $0.00 \pm 0.00$ \\
		& sync.~(w/o DP) &  $0.20 $ & - & $0.09 \pm 0.00$\\
		& sync.~(w/ DP) &  $0.20 $ & $(10, 10^{-5}) $ & $0.28 \pm 0.00$ \\
		\hline
    \hline
		\multirow{3}{*}{LSUN-U} & real & $2.50 $ & - & $0.00 \pm 0.00$ \\
		& sync.~(w/o DP) & $2.50  $ & - & $0.25 \pm 0.00 $\\
		& sync.~(w/ DP) & $ 2.45 $ & $(10, 10^{-5})$ & $0.29 \pm 0.00 $\\
		\hline
	\end{tabular}
	\caption{Inception scores for generated examples and real examples ON
	CelebA and LSUN--Bedroom.}
	\label{tab:iscorewolab}
\end{table}

\subsubsection*{\bf Analysis Tasks}

We further evaluate \system's performance in concrete analysis tasks. Specifically, we consider the use of synthetic data in a semi-supervised classification task. In such a task, the analyst possesses a small amount of public, labeled data and a large amount of synthetic, unlabeled data (generated by \system). The goal is to leverage both the labeled and unlabeled data to train a better classifier than that trained only using the limited labeled data.

%

To make things more interesting,
we consider the setting of two separate classifiers. The first one $\mathcal{C}_1
$ has the same structure as a regular image
classifier; while the second one $\mathcal{C}_2$
classifies using both an image and its code. The
architecture of $\mathcal{C}_2$ is designed to learn the correlation between the codes and the
images. The learning procedure is sketched in Algorithm~\ref{alg:semi}, it consists of two part for
each iteration.
In the first part (line 2-5), we first sample a batch of $m$ codes $\hat{z}$,
and generate images $\hat{z}$ from generator $G$ with $\hat{z}$, and use $\mathcal{C}_1$ to
classify $\hat z$ into category $\hat y$. Then we update $\mathcal{C}_2$ with $(\hat{z}, \hat{x}, \hat{y})$~
(line 5). In the second part (line 6-9), we sample a batch of $m\cdot(1 - p_s)$ real examples $(x, y)$ from the labeled data, and
then sample another batch of $m \cdot p_s$ codes $\hat{z}$ and their synthetic images $ \hat{z}$, and labeled them with $\mathcal{C}_2$ as $\hat{y}$.
Now we take both sets of inputs to update $\mathcal{C}_1$.

We hope that $\mathcal C_1$ and $\mathcal C_2$ would converge quickly.
However, In the experiments, we found that if we use the data from $\mathcal C_2$ too early, it would cause
the entire model unstable, and difficult to converge to a proper accuracy, due to that $\mathcal{C}_2$ is not
fully trained with correct labels (i.e., in early state, both $\mathcal C_1$ and $\mathcal C_2$ have low accuracy).
 Thus,  In practice, it is sensible to increase $p_s$ gradually during the training after some iterations. In our experiments, we first introduce
 $p_s = 0$ at the one third point of the regular model, and gradually increase it to $p_\text{s, final}$. Then we follow the flow in Algorithm~\ref{alg:semi} with
 $p_s = p_\text{s, final}$.

%
%

\begin{algorithm}
	\caption{Semi-Supervised Classification}
	\label{alg:semi}
	\KwIn{$m$ - batch size; $p_s$ - percentage of synthetic data in training; $G_\theta$ - privacy-preserving generator; $\mathcal{D}_{\rm pub}$ - public labeled dataset}
	\KwOut{$\mathcal C_1^{\theta_1}$ - image classifier; $\mathcal{C}_2^{\theta_2}$ - image \& code classifier
	}
	\While{$\mathcal C_1^{\theta_1}$ or  $\mathcal{C}_2^{\theta_2}$ not converged yet}{
		\tcp{training $\mathcal{C}_2$}
		sample $\{\bm \hat z_i\}_{i = 1}^m \sim p_z$ \;
		generate $ \{ \hat{x}_i\}_{i = 1}^m$ with $G_\theta$ and $\{\hat z_i \}_{i = 1}^m$ \;
		$\{\hat y_i\}_{i = 1}^m \gets$ $\mathcal C_1^{\theta_1}$~($\{ \hat{x}_i\}_{i = 1}^m$) \;
		update $\mathcal C_2$ with $(\theta_2, \{ (\hat{z}_i,  \hat{x}_i,  \hat{y}_i) \}_{i = 1}^m )$ \;
		\tcp{training $\mathcal{C}_1$}
		sample $\{\hat z_i\}_{i = 1}^{m \cdot p_s} \sim p_z$ \;
		generate $ \{\hat x_i\}_{i = 1}^{m \cdot p_s}$ with $G_\theta$ and $\{ \hat z_i \}_{i = 1}^{m \cdot p_s}$ \;
		$\{\hat y_i\}_{i = 1}^{m \cdot p_s} \gets$ $\mathcal C_2^{\theta_1}$~($ \{(\hat{z}_i, \hat{x}_i)\}_{i = 1}^{m \cdot p_s}$) \;
		sample $\{ (x_i, y_i) \}_{i = 1}^{m \cdot (1 - p_s)}$ from $\mathcal{D}_{\rm pub}$ \;
		update $\mathcal C_1$ with $\left(\theta_1,
		\{ (\hat{x}_i, \hat{y}_i) \}_{i = 1}^{m \cdot p_s},  \{  (x_i, y_i) \}_{i = 1}^{m \cdot (1 - p_s)}\right)$\;
	}
	\Return $\mathcal C_1^{\theta_1}$, $\mathcal{C}_2^{\theta_2}$
	\label{alg:semisupervised}

\end{algorithm}

We evaluate \system's performance in such a task on the LSUN-L dataset.
 It is clear that the semi-supervised classifier steadily outperforms the supervised classifier. The difference is especially evident when the size of the public data is small (i.e., limited number of labeled samples). For example, for $n = 0.5\times 10^4$, the semi-supervised classifier outperforms the supervised one by more than 6\%. We can thus conclude that \system supplies valuable synthetic data for such semi-supervised classification tasks.

%
%
\begin{table}
	\centering
	\caption{Semi-supervised Classification Task Result (LSUN-L)}
	\label{tab:semi}
	\begin{tabular}{c | c c  c c }
		Setting & n ($\times 10^4$) & $p_\text{s, final} $ & Original accuracy & Semi accuracy\\
		\hline
		\multirow{4}{*}{GAN} & 0.5 & 0.2 & 0.538 & 0.615\\
		& 1.5 & 0.2 & 0.650 & 0.661\\
		& 2.5 & 0.2 & 0.665& 0.699\\
		& 5.0 & 0.2 & 0.733 & 0.755 \\
		\hline
		\multirow{4}{*}{\system} & 0.5 & 0.2 & 0.538 & 0.571 \\
		& 1.5 & 0.2 & 0.650 & 0.669 \\
		& 2.5 & 0.2 & 0.665 & 0.695 \\
		& 5.0 & 0.2 & 0.733 & 0.737
	\end{tabular}
\end{table}

\begin{table*}
	\centering
	\caption{Semi--supervised classification tasks}
	\label{tab:semi}
	\begin{tabular}{c | c c  c c }
		Dataset & \#N~(public) & Fraction & Original accuracy & Semi accuracy\\
		& \\
		\hline
		\multirow{5}{*}{MNIST~(No DP)} & 100 & 0.2 & 0.720 & 0.567 \\
		& 200  & 0.2 & 0.824 & 0.802\\
		& 300  & 0.2 & 0.874 & 0.824 \\
		& 500  & 0.2 & 0.909  & 0.895 \\
		& 1000 & 0.2 & 0.934 & 0.923\\
		\hline

		\multirow{5}{*}{MNIST~(DP)} & 100 & 0.2 & 0.720 & 0.569 \\
		& 200  & 0.2 & 0.824 & 0.747 \\
		& 300  & 0.2 & 0.874 & 0.835 \\
		& 500  & 0.2 & 0.909 &  0.884 \\
		& 1000 & 0.2 & 0.934 & 0.903  \\
		\hline

		\multirow{4}{*}{LSUN--5 Cat~(No DP)} & 5000 & 0.2 & 0.538 & 0.615\\
		& 15000 & 0.2 & 0.650 & 0.661\\
		& 25000 & 0.2 & 0.665& 0.699\\
		& 50000 & 0.2 & 0.733 & 0.755 \\
		\hline
		\multirow{4}{*}{LSUN--5 Cat~(DP)} & 5000 & 0.2 & 0.538 & 0.571 \\
		& 15000 & 0.2 & 0.650 & 0.669 \\
		& 25000 & 0.2 & 0.665 & 0.695 \\
		& 50000 & 0.2 & 0.733 & 0.737
	\end{tabular}
\end{table*}

\subsection{Effectiveness of Optimizations}

In the final set of experiments, we evaluate the impact of different optimization strategies on \system's performance.

We first measure the strategy of weight clustering on the number of allowed iterations given the same privacy constraints. Table~\ref{tab:iternums} compares the number of allowed iterations before and after applying the weight clustering strategy. It is clear that across all the datasets, this strategy significantly increases the number of allowed iterations, thereby improving the retained utility in the generative models.

We further measure the impact of different configures of multiple optimization strategies on \system's performance with results listed in Table~\ref{tab:optimquality} and Table~\ref{tab:optimqualityun} for the labeled and unlabeled datasets respectively. It is observed that in general, combining multi-fold optimizations significantly boosts \system's performance. For example, in the case of Inception score, the score is increased from 6.59 to 8.64.

\begin{table}
	\centering
	\caption{Effect of weight grouping on the number of allowed iterations: $t_\text{before}$ and $t_\text{after}$ are respectively the maximum number of iterations
	under privacy constraint.}
	\label{tab:iternums}
	\begin{tabular}{c | c c }
		Dataset & $t_\text{before}$ & $t_\text{after}$ \\
		\hline
		MNIST & 560 & 780 \\
		CelebA & 2910 & 4070 \\
		LSUN-L & 15010 & 19300 \\
		LSUN-U & 24630 & 31670 \\
		\hline
	\end{tabular}

\end{table}

\begin{table}
	\centering
	\caption{Impact of optimization on Inception scores (MNIST) $S_1$ - Weight/Bias Separation, $S_2$ - Automated Weight Grouping, $S_3$ - Adaptive Clipping, $S_4$ - Warm Starting}
	\label{tab:optimquality}
	\begin{tabular} {c | c  c  c c c c}
		Strategy & \multicolumn{6}{c}{Configuration} \\
		\hline
		\hline
		$S_1$ &  & \checkmark & & \checkmark & & \checkmark \\
		$S_2$ & & & & & \checkmark & \\
		$S_3$ &  & & \checkmark & \checkmark & \checkmark & \checkmark \\
		$S_4$ & & & & & & \checkmark \\
		\hline
		\hline
		\multirow{2}{*}{Score} & $6.59$  & $6.46$ & $7.76$
		&  $8.20$ & $8.03$ & \bm{$8.64$} \\
		&    $\pm  0.03 $ & $\pm 0.04 $ & $\pm 0.05 $
		&  $\pm 0.02$ & $\pm 0.04 $ & \bm{$\pm 0.03$}\\
		\hline
	\end{tabular}
\end{table}

\begin{table*}
	\centering
	\caption{Effectiveness of optimizations with respect to quality
	scores~(labeled)}
	\label{tab:optimquality}
	\begin{tabular} {c | c  c  c c c }
		Dataset & Weight/Bias Grouping & Automatic Grouping & Gradient estimation &
		Pre--trained & Score  \\
		\hline
		\multirow{6}{*}{MNIST}    & & & &  & $6.59 \pm  0.03 $\\  
		 & \checkmark & & & & $6.46 \pm 0.04 $ \\ 
		 & & & \checkmark & & $7.76 \pm 0.05 $ \\ 
		 & \checkmark & &  \checkmark & &  $8.20 \pm 0.02$\\
		 & & \checkmark & \checkmark & & $8.03 \pm 0.04 $ \\
		 & \checkmark &  & \checkmark & \checkmark & \bm{$8.64 \pm 0.03$} \\
		\hline
	\end{tabular}
\end{table*}

\begin{table}
	\centering
	\caption{Effectiveness of optimizations with respect to quality
		scores~(unlabeled)}
	\label{tab:optimqualityun}
	\begin{tabular} {c | c  c  c c c c}
		Strategy & \multicolumn{6}{c}{Configuration} \\
		\hline
		\hline
		$S_1$ &  & \checkmark & & \checkmark & & \checkmark \\
		$S_2$ & & & & & \checkmark & \\
		$S_3$ &  & & \checkmark & \checkmark & \checkmark & \checkmark \\
		$S_4$ & & & & & & \checkmark \\
		\hline
		\hline
		\multirow{2}{*}{Score} & $0.31$  & $0.31$ & $0.31$
		&  $0.31$ & $0.31$ & \bm{$0.28$} \\
		&    $\pm  0.00 $ & $\pm 0.00 $ & $\pm 0.00 $
		&  $\pm 0.00$ & $\pm 0.00 $ & \bm{$\pm 0.00$}\\
		\hline
	\end{tabular}
\end{table}

\section{Additional Related Work}
\label{sec:liter}

Recent research has suggested that it is possible to enforce strong differential privacy protection in many types of analyses without significant utility loss (see~\cite{Dwork:2009:tcc} for an excellent survey).

The existing work can be roughly categorized into supervised settings, such as logistic regression~\cite{Chaudhuri:2011:erm} and
support vector machine~(SVM)~\cite{Chaudhuri:2011:erm, Rubinstein:2009:dpsvm},
and unsupervised settings, such as publishing histograms~\cite{Xu:2013:differentially},
releasing contingency tables~\cite{Yang:2012:differential},
 hypothesis testing~\cite{Gaboardi:2016:dpht},
 collaborative recommendation~\cite{Zhu:2016:dprecommend}, K-Means clustering~\cite{Su:2016:dpkmeans},
and spectral graph analysis~\cite{Wang:2013:dpga}. To our best knowledge, this work represents one of the first attempts in the direction of differentially private publishing of semantic-rich data.
%
%

More recently, extensive research effort has focused on enforcing differential privacy in training deep learning models. Adadi {\em et al.}  \cite{Abadi:2016:dpdl} proposed to use differentially private stochastic gradient descent~\cite{Song:2013:stochastic} to enforce $(\epsilon, \delta)$-differential privacy in training deep neural networks. Phan {\em et al.}~\cite{Phan:2016:dpae}, proposed to apply the
functional mechanism~\cite{Zhang:2012:functional} to train differentially private auto-encoders. In \cite{Phan:2017:dplap}, Phan {\em et al.} proposed an adaptive Laplace mechanism to reduce the required random noise. Our work advances this line of research by enforcing differential privacy in the setting of training generative adversarial networks, a new class of deep learning models.

The work most relevant to ours is perhaps~\cite{Gergely:2017:dpmixgen}, in which Gergely {\em et al.} proposed a framework of training differential private deep generative networks. Our work however differs from~\cite{Gergely:2017:dpmixgen} in significant ways.
%
First, \cite{Gergely:2017:dpmixgen} used a two-stage process that first performs clustering and then produces generative models such as
Restricted Boltzmann Machine (RBM)~\cite{Fischer:2012:rbmintro} and Variational Auto-Encoder (VAE)~\cite{Kingma:2013:vae}; in contrast, our work provides an end-to-end solution that produces general GAN, which is known to outperform RBM and VAE in data synthesis. Second, the method in \cite{Fischer:2012:rbmintro} only works well for low-dimensional data (e.g., 784 for MNIST and 1303 for CDR); in contrast, \system is able to generate high-quality, high-dimensional synthetic data (e.g., 12,288 for LSUN). In~\cite{Jones:2017:dpgan}, Jones {\em et al.} also proposed a differentially private GAN
framework, which however only generates low-dimensional samples~($ 3  \times 12 $) and meanwhile requires label information. In comparison,
\system works well for high-dimensional data without any labeling information. To achieve this, \system adopts multiple optimization strategies that improve both training stability and utility retention.

\section{Conclusion and Discussion}
\label{sec:end}

In this paper, we present \system, a generic framework of publishing semantic-rich data in a privacy-preserving manner. Instead of releasing sanitized datasets, \system releases differentially private generative models, which can be used by analysts to synthesize unlimited amount of data for arbitrary analysis tasks. To achieve this, \system integrates the generative adversarial network framework with differential privacy mechanisms, provides refined analysis of privacy loss within this framework, and employs a suite of optimization strategies to address the training stability and scalability challenges. Using benchmark datasets and analysis tasks, we show that \system is able to synthesize data of utility comparable to original data, at the cost of modest privacy loss.

This work also opens several avenues for further research. For example, in this paper we mostly focus on publishing image data, while it is worth investigation to adapt \system to support other types of semantic-rich data (e.g., LSTM for language modeling tasks). In addition, \system is formulated as an unsupervised framework, while its extension to supervised and semi-supervised learning is attractive for data with label information.

\bibliographystyle{acm}
\bibliography{main}



\end{document}